\documentclass[]{elsarticle}
\usepackage{amsmath,amssymb,amsthm}
\usepackage{graphicx}
\usepackage{a4wide}
\usepackage[ruled, vlined, linesnumbered]{algorithm2e}

\usepackage{microtype}

\newcommand{\NP}{\textsf{NP}}

\newtheorem{theorem}{Theorem}[]
\newtheorem{lemma}[theorem]{Lemma}

\newtheorem{proposition}[theorem]{Proposition}
\newtheorem{observation}[theorem]{Observation}

\def\mwis{{\sc{MWIS }}}
\def\minwed{{\sc{Min-WED }}}
\def\maxwed{{\sc{Max-WED }}}

\begin{document}
\begin{frontmatter}
\title{Polynomial-time Algorithms for Weighted Efficient Domination Problems in AT-free Graphs and Dually Chordal Graphs}

\author[ROSTOCK]{Andreas Brandst\"adt}
\ead{ab@informatik.uni-rostock.de}

\author[IAM,FAMNIT]{Pavel Fi\v cur}
\ead{pavel.ficur@upr.si}

\author[KENT]{Arne Leitert}
\ead{aleitert@cs.kent.edu}

\author[IAM,FAMNIT]{Martin Milani\v c\corref{cor}}
\ead{martin.milanic@upr.si}

\address[ROSTOCK]{Institut f\"ur Informatik, Universit\"at Rostock, Germany}
\address[IAM]{University of Primorska, UP IAM, Muzejski trg 2, SI6000 Koper, Slovenia}
\address[FAMNIT]{University of Primorska, UP FAMNIT, Glagolja\v ska 8, SI6000 Koper, Slovenia}
\address[KENT]{Kent State University, Department of Computer Science, Kent, Ohio 44242, USA}

\cortext[cor]{Corresponding author}

\begin{abstract}
An efficient dominating set (or perfect code) in a graph is a set of vertices the closed neighborhoods of which partition the vertex set of the graph.
The minimum weight efficient domination problem is the problem of finding an efficient dominating set of minimum weight in a given vertex-weighted graph;
the maximum weight efficient domination problem is defined similarly.
We develop a framework for solving the weighted efficient domination problems based on a reduction to the maximum weight independent set problem in the square of the input graph. 
Using this approach, we improve on several previous results from the literature by deriving polynomial-time algorithms for the weighted efficient domination problems in the classes of dually chordal and AT-free graphs. 
In particular, this answers a question by Lu and Tang regarding the complexity of the minimum weight efficient domination problem in strongly chordal graphs.
\end{abstract}

\begin{keyword}
perfect code \sep efficient domination \sep weighted efficient domination\sep maximum weight independent set problem
\sep dually chordal graph \sep cocomparability graph \sep AT-free graph  \sep polynomial-time algorithm
\end{keyword}
\end{frontmatter}

\noindent
\textbf{Notice:} This is the author's version of a work that was accepted for publication in Information Processing Letters. Changes resulting from the publishing process, such as peer review, editing, corrections, structural formatting, and other quality control mechanisms may not be reflected in this document. Changes may have been made to this work since it was submitted for publication. A definitive version was subsequently published in Information Processing Letters 115 (2015) 256-262, DOI: 10.1016/j.ipl.2014.09.024.

\section{Introduction}

The concept of an efficient dominating set in a graph was introduced by Biggs~\cite{Biggs} as a generalization of the notion of a perfect error-correcting code in coding theory. Given a (simple, finite, undirected) graph $G=(V,E)$, we say that a vertex \emph{dominates} itself and each of its neighbors. 
An \emph{efficient dominating set} in $G$ is a subset of vertices $D\subseteq V$ such that every vertex $v \in V$ is dominated by precisely one vertex from $D$. 
Efficient domination has several interesting applications in coding theory and resource allocation of parallel processing systems~\cite{Biggs, LLT97, LS90}. The notion of an efficient dominating set appeared in the literature under various other names such as: \emph{perfect code}, \emph{$1$-perfect code}, \emph{independent perfect dominating set}, and \emph{perfect dominating set}. 
Note, however, that the name \emph{perfect dominating set} has also been used in the literature to denote a subset of vertices $D\subseteq V$ such that every vertex $v \in V\setminus D$ is dominated by precisely one vertex from $D$. 
See~\cite{LT02} for a nice historical overview of the notion of efficient dominating set.

A graph is \emph{efficiently dominatable} if it contains an efficient dominating set.
All paths are efficiently dominatable, and a cycle $C_k$ on $k$ vertices is efficiently dominatable if and only if $k$ is a multiple of $3$.
Bange \emph{et al.}~\cite{BBS88} showed that if a graph $G$ has an efficient dominating set, then all efficient dominating sets of $G$ have the same cardinality, which equals the minimum cardinality of a dominating set of $G$. 
The efficient domination (ED) problem consists in determining whether the input graph is efficiently dominatable. 
The ED problem is \NP-complete even for restricted graph classes such as planar cubic graphs~\cite{Kratochvil91}, bipartite graphs~\cite{YL96}, planar bipartite graphs~\cite{LT02}, chordal bipartite graphs~\cite{LT02}, chordal graphs~\cite{YL96}, and line graphs of planar bipartite graphs of maximum degree three~\cite{BHN}.
On the other hand, the ED problem is polynomial for several graph classes, including trees~\cite{BBS88,FH91}, block graphs~\cite{YL96}, interval graphs~\cite{CL94, CRC95, KE00, KMM95}, circular-arc graphs~\cite{CL94,KE00}, cocomparability graphs~\cite{C97,CRC95}, bipartite permutation graphs~\cite{LT02}, permutation graphs~\cite{LLT97}, distance-hereditary graphs~\cite{LT02}, trapezoid graphs~\cite{LLT97,L98}, split graphs~\cite{CL93}, dually chordal graphs~\cite{BLR}, AT-free graphs~\cite{BLR,BKM99}, and hereditary efficiently dominatable graphs~\cite{Milanic}.
The efficient domination problem has also been studied from a parameterized point of view, see, e.g.,~\cite{Cesati,Guo-Niedermeier}.

In this paper, we consider two weighted versions of the ED problem.
In its decision form, the minimization version of problems can be stated as follows:

\medskip
\begin{center}
\fbox{\parbox{0.85\linewidth}{\noindent
{\sc Minimum Weight Efficient Dominating Set (Min-WED)}\\[.8ex]
\begin{tabular*}{.93\textwidth}{rl}
\emph{Input:} & A graph $G$, vertex weights $w:V\to \mathbb{Z}$, an integer $k$.\\
\emph{Question:} &  Does $G$ contain an efficient dominating set $D$ \\ & of total weight $w(D) := \sum_{x\in D}w(x)\le k$?
\end{tabular*}
}}
\end{center}
\medskip

The maximization version of problem ({\sc Maximum Weight Efficient Dominating Set} \hbox{\sc (Max-WED)}), can be defined analogously, replacing the condition $w(D) \le k$ with $w(D) \ge k$.

Clearly, a graph $G=(V,E)$ contains an efficient dominating set if and only if $(G,w,|V|)$ is a yes instance to the \minwed problem, where $w(x) = 1$ for all $x\in V$.
Consequently, the \minwed problem is \NP-complete in every class of graphs where the ED problem is \NP-complete.
On the other hand, the \minwed problem is solvable in polynomial time for trees~\cite{Y92}, cocomparability graphs~\cite{C97,CRC95}, split graphs~\cite{CL93}, interval graphs~\cite{CL94, CRC95}, circular-arc graphs~\cite{CL94}, permutation graphs~\cite{LLT97}, trapezoid graphs~\cite{LLT97,L98}, bipartite permutation graphs~\cite{LT02}, convex bipartite graphs~\cite{BLR} and their superclass interval bigraphs~\cite{BLR}, distance-hereditary graphs~\cite{LT02}, block graphs~\cite{YL96} and hereditary efficiently dominatable graphs~\cite{CMR00,Milanic}.
Since negative weights are allowed, the \maxwed problem is equivalent to the \minwed problem.

We develop a framework for solving the \minwed and \maxwed problems based on a reduction to the {\sc Maximum Weight Independent Set} problem in the square of the input graph (this is done in Section~\ref{sec:wed-mwis}). 
We then apply this framework, together with some existing results from the literature, to derive new polynomial cases of the \minwed problems, namely the classes of dually chordal graphs and AT-free graphs (in Sections~\ref{sec:dually-chordal} and~\ref{sec:AT-free}, respectively).
The class of dually chordal graphs contains the class of strongly chordal graphs, for which the existence of a polynomial-time algorithm for the \minwed problem was posed as an open problem by Lu and Tang in~\cite{LT02}.
We give a linear-time algorithm for the \minwed and \maxwed problems in the class of dually chordal graphs.
Our algorithm for the \minwed problem in the class of AT-free graphs is of complexity ${\mathcal O}\big(\min\{nm+n^2,n^\omega\}\big)$, where $\omega < 2.3727$ is the matrix multiplication exponent~\cite{Williams}, and $n$ and $m$ denote the number of vertices and edges of the input graph, respectively.
This improves on the existing polynomial-time algorithms for the ED problem in AT-free graphs~\hbox{\cite{BLR,BKM99}}, both of which run in time ${\mathcal O}(n^4)$.

\bigskip
In Fig.~\ref{fig:classes} below, we show the Hasse diagram of the poset of most of the graph classes mentioned above, ordered with respect to inclusion.
For each class, we state the complexity of the \minwed  problem, denoting by \NP-c the fact that the problem is \NP-complete in the corresponding class, while in the case of polynomial-time solvability, we state the running times of the fastest known algorithms.
The inclusion relations in the figure were verified with help of the Information System on Graph Classes and their Inclusions~\cite{graphclasses}.

\begin{figure}[ht]
\begin{center}
\includegraphics[width=\linewidth]{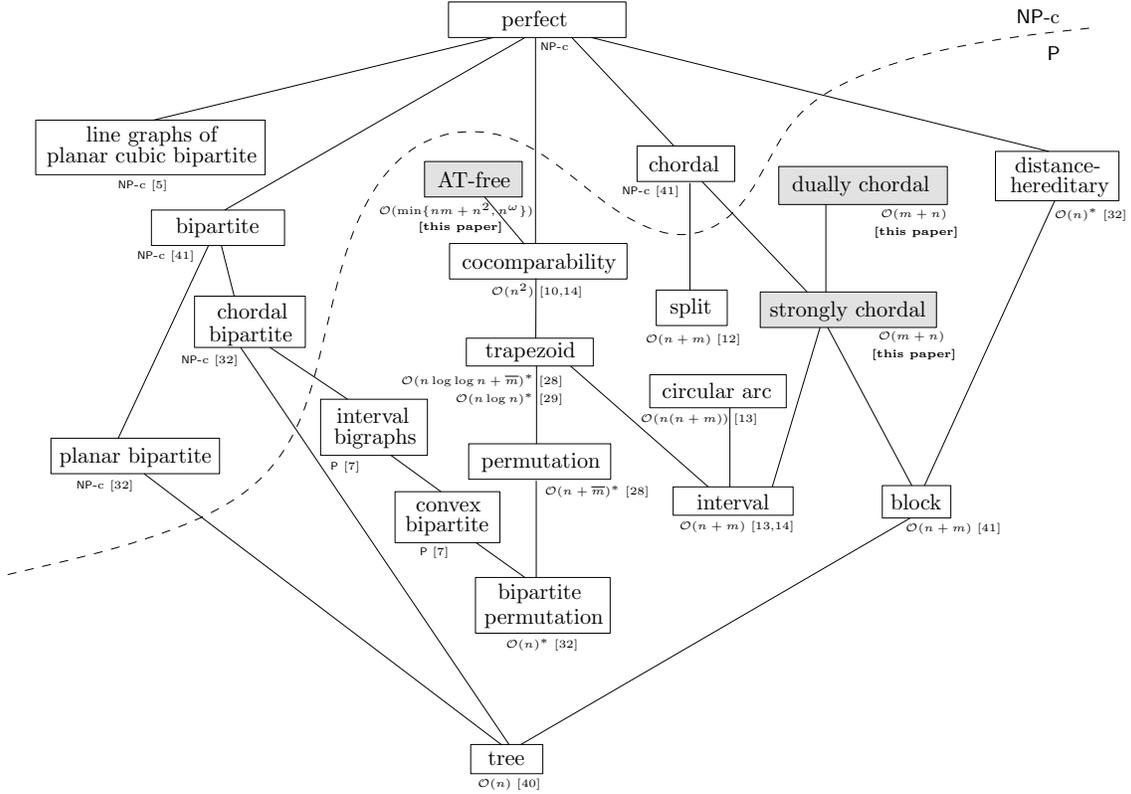}
\caption{
Algorithmic complexity of the weighted efficient domination problem in various graph classes.
In the time complexities marked by $^*$, it is assumed that a graph is given by a trapezoid diagram (in the case of trapezoid graphs), by a permutation (in the case of permutation and bipartite permutation graphs), or by a one-vertex-extension ordering (in the case of distance-hereditary graphs).
}\label{fig:classes}
\end{center}
\end{figure}

\section{Preliminaries}

We only consider finite, simple and undirected graphs.
As usual, the \emph{neighborhood} of a vertex $v$ in a graph $G = (V,E)$ is denoted by $N_G(v) := \{u\in V\mid uv\in E\}$ (or simply $N(v)$, if the graph is clear from the context), and the closed neighborhood of $v$ is $N_G[v]:= N_G(v)\cup \{v\}$ (or simply $N[v]$).
The \emph{degree} of a vertex $v$ in a graph $G$ is $\deg_G(v) := |N_G(v)|$.
The \emph{square} of a graph $G = (V,E)$ is the graph $G^2 = (V,E^2)$ such that $uv\in E^2$ if and only if either $uv\in E$ or $u$ and $v$ are distinct and have a common neighbor in $G$.
The \emph{complement} of a graph $G = (V,E)$ is the graph $\overline{G}= (V,\overline{E})$ such that two distinct vertices $u$ and $v$ of $G$ are adjacent in $\overline{G}$ if and only if they are non-adjacent in $G$.
An \emph{independent set} in a graph is a set of pairwise non-adjacent vertices, and a \emph{clique} is a set of pairwise adjacent vertices. 
Given a graph $G$ and a total ordering $\sigma=(v_1,\ldots, v_n)$ of its vertex set, we will denote by
$G_{\sigma,i}$ the subgraph of $G$ induced by $\{v_i,v_{i+1},\ldots, v_n\}$, and by $N_{\sigma,i}(v)$ (resp.~$N_{\sigma,i}[v]$), the neighborhood (resp., the closed neighborhood) of a vertex $v\in V(G_{\sigma,i})$ in $G_{\sigma,i}$.

\emph{Chordal graphs} are graphs in which every cycle on at least $4$ vertices has a chord (an edge connecting two non-consecutive vertices on the cycle). 
It is well known that chordal graphs are precisely the graphs that admit a perfect elimination ordering. 
A  \emph{perfect elimination ordering} of a graph $G$ is a total ordering $\sigma = (v_1,\ldots, v_n)$ of its vertex set such that for every $i\in \{1,\ldots, n\}$, vertex $v_i$ is simplicial in $G_{\sigma,i}$, that is, the set $N_{\sigma,i}(v_i)$ is a clique in $G_{\sigma,i}$ (equivalently: in $G$).
\emph{Dually chordal graphs} were introduced in~\cite{BDCV} as graphs that admit a certain vertex ordering called a maximum neighborhood ordering.
Given two vertices $u$ and $v$ in a graph $G = (V,E)$, vertex $u\in N[v]$ is a \emph{maximum neighbor} of vertex $v$ if $N[w] \subseteq N[u]$ holds for all $w\in N[v]$. 
A linear ordering $\sigma=(v_1,\ldots, v_n)$ of $V$ is a \emph{maximum neighborhood ordering} of $G$ if for all $i\in \{1,\ldots, n\}$, vertex $v_i$ has a maximum neighbor in the graph $G_{\sigma,i}$.
Dually chordal graphs admit several equivalent characterizations and form a generalization of \emph{strongly chordal graphs}, a well-known subclass of chordal graphs properly containing the classes of trees and interval graphs.
In fact, strongly chordal graphs are exactly the hereditary dually chordal graphs, that is, graphs for which each induced subgraph is a dually chordal graph. Algorithmic aspects of dually chordal graphs were treated systematically in~\cite{BCD98}.

A partial order may be viewed as a transitive directed acyclic graph. 
A \emph{comparability graph} is the graph obtained by ignoring the edge directions of a transitive directed acyclic graph. 
A graph is \emph{cocomparability} if its complement is a comparability graph.

An \emph{asteroidal triple} in a graph $G$ is a subset $I = \{u,v,w\}$ of three pairwise non-adjacent vertices such that for every vertex $x\in I$, the two vertices in $I\setminus \{x\}$ are contained in the same connected component of the graph $G-N[x]$. 
A graph $G$ is said to be \emph{AT-free}~\cite{COS97} if it contains no asteroidal triples.
AT-free graphs form a large class of graphs containing cographs, interval graphs, permutation graphs, trapezoid graphs, and cocomparability graphs. For further details on graph classes, see~\cite{BLV99,Gol04}.

\section{The Reduction}
\label{sec:wed-mwis}

The decision version of the \mwis problem can be stated as follows:

\medskip
\begin{center}
\fbox{\parbox{0.85\linewidth}{\noindent
{\sc Maximum Weight Independent Set (MWIS)}\\[.8ex]
\begin{tabular*}{.93\textwidth}{rl}
\emph{Input:} & A tuple $(G,w,k)$ consisting of a graph $G = (V,E)$, \\ & vertex weights $w:V\to \mathbb{Z}$, and an integer $k\in \mathbb{Z}$.\\
\emph{Question:} & Does $G$ contain an independent set $I\subseteq V$ such that $w(I)\ge k$?
\end{tabular*}
}}
\end{center}
\medskip

This problem, together with its unweighted version, is a classical \NP-complete problem, which remains hard even for several restricted graph classes such as triangle-free graphs~\cite{Pol74}, $K_{1,4}$-free graphs \cite{Min80} and planar graphs of maximum degree at most three~\cite{GJ77}.
On the other hand, the MWIS problem has been shown to admit polynomial solutions in several graph classes, such as claw-free graphs~\cite{Min80,NT01,OPS08}, and their generalization fork-free graphs~\cite{LM08}, perfect graphs~\cite{GLS84}, and AT-free graphs~\cite{BKM99}.

The following simple but useful observation from~\cite{BLR} establishes a connection between efficient dominating sets in a graph $G$ and independent sets in its square.

\begin{observation}[\cite{BLR}]\label{lem:eds-wis}
Let $G=(V,E)$ be a graph, and let $w$ be a vertex weight function for $G$ defined by $w(x) = |N_G[x]|$ for all $x\in V$. 
Then, the following statements are equivalent for every $I\subseteq V$:
\begin{enumerate}
  \item[(i)] $I$ is an efficient dominating set in $G$.
  \item[(ii)] $I$ is a maximum-weight independent set in $G^2$ such that $w(I) = |V|$.
\end{enumerate}
\end{observation}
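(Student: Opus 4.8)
The plan is to prove the equivalence by establishing both implications directly from the definitions, with the key structural fact being that in $G^2$ two vertices are adjacent precisely when their closed neighborhoods in $G$ intersect. Concretely, for distinct $u,v\in V$ we have $uv\in E^2$ if and only if $N_G[u]\cap N_G[v]\neq\emptyset$: this is immediate from the definition of the square, since $uv\in E^2$ means either $uv\in E$ (so $u\in N_G[u]\cap N_G[v]$) or $u,v$ share a common neighbor. Dually, $u$ and $v$ are \emph{non-adjacent} in $G^2$ exactly when $N_G[u]\cap N_G[v]=\emptyset$. I would record this reformulation first, as it is the bridge between "independent in $G^2$" and "closed neighborhoods are pairwise disjoint."

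For the direction (i)$\Rightarrow$(ii), suppose $I$ is an efficient dominating set. By definition every vertex of $G$ is dominated by exactly one element of $I$, so the closed neighborhoods $\{N_G[x] : x\in I\}$ partition $V$. In particular they are pairwise disjoint, hence by the reformulation $I$ is independent in $G^2$. Summing over the partition gives $w(I)=\sum_{x\in I}|N_G[x]|=|V|$, which is the claimed weight. It then remains to argue maximality: for \emph{any} independent set $J$ in $G^2$, the sets $\{N_G[x] : x\in J\}$ are pairwise disjoint subsets of $V$, so $w(J)=\sum_{x\in J}|N_G[x]|\le |V|$. Thus every independent set in $G^2$ has weight at most $|V|$, and $I$ attains this bound, so $I$ is maximum-weight.

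For the converse (ii)$\Rightarrow$(i), suppose $I$ is independent in $G^2$ with $w(I)=|V|$. Independence gives, via the reformulation, that the closed neighborhoods $\{N_G[x] : x\in I\}$ are pairwise disjoint, and the weight condition $\sum_{x\in I}|N_G[x]|=|V|$ forces these disjoint sets to cover all of $V$; hence they partition $V$. A partition of $V$ into closed neighborhoods of the vertices of $I$ says exactly that every vertex of $G$ lies in the closed neighborhood of precisely one element of $I$, i.e.\ is dominated by exactly one vertex of $I$, so $I$ is an efficient dominating set.

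The only mildly delicate point—and the step I would check most carefully—is the passage from "pairwise disjoint with weights summing to $|V|$" to "a genuine partition covering $V$." This is where the integrality and nonnegativity of the sizes $|N_G[x]|$ are used: disjoint subsets of $V$ whose cardinalities sum to $|V|$ must exhaust $V$. Everything else is a direct unwinding of the definition of efficient domination against the reformulation $uv\notin E^2 \iff N_G[u]\cap N_G[v]=\emptyset$, so I expect no real obstacle beyond stating that reformulation cleanly at the outset.
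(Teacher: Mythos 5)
Your proof is correct. The paper itself states this observation without proof (it is cited from [BLR]), so there is nothing to compare against line by line; your argument---reformulating adjacency in $G^2$ as ``closed neighborhoods in $G$ intersect,'' then translating efficient domination into a partition of $V$ by closed neighborhoods---is precisely the standard argument, and it is the same fact the paper itself uses implicitly in the proof of Lemma~\ref{lem:wed-mwis-2} (``Since $I$ is an independent set in $G^2$, \dots the closed neighborhoods (in $G$) of its elements are pairwise disjoint''). Your handling of the two delicate points (maximality via the universal bound $w(J)\le |V|$ for every independent set $J$ in $G^2$, and disjoint sets of total cardinality $|V|$ necessarily covering $V$) is exactly right.
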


The above observation (as well as a slightly more general observation from~\cite{Milanic}) has an immediate algorithmic corollary, reducing the ED problem to the MWIS problem in the square of the input graph. 
In~\cite{BLR,Milanic}, the exact time complexity of such a reduction was not specified; we do this in the following proposition.
Given a graph $G$, we denote by $|G|$ its encoding length.

\begin{proposition}[\cite{BLR,Milanic}]\label{prop:1}
Let $\cal C$ be a graph class for which the \mwis problem is solvable in time $T(|G|)$ on squares of graphs from $\cal C$.
Then, the ED problem for ${\cal C}$ is solvable in time \hbox{${\mathcal O}\big(\min\{nm+n,n^\omega\}+T(|G^2|))$}.
\end{proposition}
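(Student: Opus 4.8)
The plan is to turn Observation~\ref{lem:eds-wis} into a reduction that makes a single call to the \mwis decision procedure. Given an input graph $G=(V,E)$ from $\cal C$ with $n=|V|$ and $m=|E|$, I would construct the square $G^2$ together with the weight function $w(x)=|N_G[x]|$, and then decide whether $G$ is efficiently dominatable by asking the \mwis procedure, on the instance $(G^2,w,n)$, whether $G^2$ has an independent set of weight at least $n$.

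First I would build $G^2$ and the weights. The weights $w(x)=|N_G[x]|=\deg_G(x)+1$ are computed in time $\mathcal{O}(n+m)$ and are positive integers bounded by $n$, so they fit inside the encoding of $G^2$. For $G^2$ I would give two constructions and keep the faster. The combinatorial one scans, for every vertex $v$ and every $u\in N_G[v]$, the closed neighborhood $N_G[u]$, marking each vertex encountered as a $G^2$-neighbor of $v$ (a timestamp array avoids duplicate work). The total work is $\sum_{v}\sum_{u\in N_G[v]}|N_G[u]|$, whose leading term is $\sum_u \deg_G(u)^2 \le (n-1)\sum_u\deg_G(u)=\mathcal{O}(nm)$; thus this construction runs in time $\mathcal{O}(nm+n)$. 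The algebraic one forms the adjacency matrix $A$ of $G$, computes $A^2$ by fast matrix multiplication, and puts $uv\in E^2$ for $u\neq v$ exactly when $A_{uv}=1$ or $(A^2)_{uv}>0$; this costs $\mathcal{O}(n^{\omega})$. Taking the cheaper of the two gives the term $\min\{nm+n,\,n^{\omega}\}$.

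Next I would argue correctness and fold in the remaining cost. By Observation~\ref{lem:eds-wis}, $G$ admits an efficient dominating set if and only if $G^2$ has an independent set $I$ with $w(I)=n$. Since an independent set of $G^2$ indexes pairwise disjoint closed neighborhoods in $G$, no independent set can have weight exceeding $n$; hence ``$G$ is efficiently dominatable'' is equivalent to ``$G^2$ has an independent set of weight at least $n$'', which is precisely what the \mwis decision procedure reports on $(G^2,w,n)$. That single call costs $T(|G^2|)$ by hypothesis, so the total running time is $\mathcal{O}\big(\min\{nm+n,\,n^{\omega}\}+T(|G^2|)\big)$.

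The step requiring the most care is the accounting of the term $nm+n$ rather than $nm+n+|E^2|$: the combinatorial construction actually outputs an object of size $\Theta(n+|E^2|)$, and $|E^2|$ can be as large as $\Theta(n^2)$. This is a bookkeeping point rather than a genuine obstacle, since the \mwis procedure must at least read its input, giving $T(|G^2|)=\Omega(n+|E^2|)$, so the output size of the construction is absorbed into the $T(|G^2|)$ term. The only real content of the proposition is therefore the degree-sum estimate $\sum_u\deg_G(u)^2=\mathcal{O}(nm)$ for the combinatorial construction together with the reduction of forming $G^2$ to a single matrix squaring for the algebraic one.
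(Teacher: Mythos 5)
Your proof is correct and follows essentially the same route as the paper's: compute $G^2$ either by fast matrix multiplication in ${\mathcal O}(n^\omega)$ time or combinatorially in ${\mathcal O}(nm+n)$ time, then use Observation~\ref{lem:eds-wis} to reduce the ED question to a single \mwis call on $(G^2,w,n)$. The only cosmetic difference is in the combinatorial construction of $G^2$ (you scan closed neighborhoods with a timestamp array and bound the work by $\sum_u \deg_G(u)^2 = {\mathcal O}(nm)$, whereas the paper merges sorted adjacency lists at ${\mathcal O}(n)$ per merge), and you additionally spell out the correctness of the single-call reduction, which the paper leaves implicit in Observation~\ref{lem:eds-wis}.
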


\begin{proof}
It is easy to see that the square $G^2$ can be computed in time ${\mathcal O}(n^{\omega})$ using matrix multiplication. 
Alternatively, assuming that $G$ is given by adjacency lists, computing $G^2$ can be done in time ${\mathcal O}(nm{+n})$, as follows. 
First, we fix an ordering $\sigma = (v_1,\ldots, v_n)$ of $V$ and order the adjacency lists with respect to $\sigma$.
This can be done in linear time, see, e.g.,~\cite[p.~$36$]{Gol04}.
Then, we create a new copy of these lists. For every vertex $v\in V$, we process its neighbors in order and for each of them, say $w$, we merge the current adjacency list of $v$ with the adjacency list of $w$ (removing duplicates).
Since the lists are ordered, each such merging step can be done in ${\mathcal O}(n)$ time.
Hence, the overall time complexity of computing $G^2$ is
\[{\mathcal O}(m+n)+\sum_{v\in V}\sum_{w\in N_G(v)}{\mathcal O}(n) = {\mathcal O}(m+n)+{\mathcal O}(n)\cdot\sum_{v\in V}\deg_G(v) {= {\mathcal O}(nm+n+m)} = {\mathcal O}(nm+n)\,,\]
where the last equality holds since $n\ge 1$.
This justifies the time complexities given in Proposition~\ref{prop:1}.
\end{proof}

We extend the approach of Brandst\"adt {\it et al.}~\cite{BLR} and Milani\v c~\cite{Milanic} to the weighted versions of the ED problem based on a suggestion made in~\cite{Leitert}.
We have the following

\begin{theorem}\label{thm:wed-mwis}
Let $\cal C$ be a graph class for which the \mwis problem is solvable in time $T(|G|)$ on squares of graphs from $\cal C$.
Then, the \minwed and \maxwed problems are solvable on graphs in ${\cal C}$ in time ${\mathcal O}\big(\min\{nm+n,n^\omega\}+T(|G^2|))$.
\end{theorem}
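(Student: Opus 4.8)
The plan is to fold the efficient-domination constraint and the optimization of the target weights into a single weight function on $V$, so that one call to the assumed \mwis solver on $G^2$ solves the whole problem. Since \maxwed reduces to \minwed by negating the weights (as already noted in the text), I would treat only \minwed. Write $c(x):=|N_G[x]|$. By Observation~\ref{lem:eds-wis} (applied with the weight $c$ in place of its $w$), a set $I\subseteq V$ is an efficient dominating set of $G$ exactly when $I$ is independent in $G^2$ and $c(I)=|V|$; moreover the closed neighborhoods $\{N_G[x]:x\in I\}$ of an independent set $I$ of $G^2$ are pairwise disjoint, so that $c(I)\le|V|$ always holds, with equality iff $I$ is an efficient dominating set.

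Given the \minwed instance $(G,w,k)$, set $W:=\sum_{x\in V}|w(x)|$ and fix the multiplier $M:=2W+1$. Define the combined weight
\[
  w^*(x):=M\,c(x)-w(x)\qquad (x\in V),
\]
and compute a maximum-weight independent set $I^*$ of $(G^2,w^*)$ with the solver. For an independent set $I$ we have $w^*(I)=M\,c(I)-w(I)$ with $|w(I)|\le W$, so the term $M\,c(I)$ dominates the value of $w^*(I)$.

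The crux is a separation argument showing that the maximizer $I^*$ is forced onto an efficient dominating set whenever one exists. If $G$ has an efficient dominating set $D$, then $w^*(D)=M|V|-w(D)\ge M|V|-W$, while every independent set $J$ with $c(J)\le|V|-1$ satisfies $w^*(J)\le M(|V|-1)+W=M|V|-M+W<M|V|-W$, the last inequality using $M>2W$. Hence $c(I^*)=|V|$, so $I^*$ is an efficient dominating set; and as $w^*(D)=M|V|-w(D)$ for every efficient dominating set $D$, maximizing $w^*$ among them is the same as minimizing $w(D)$, whence $w(I^*)=\min\{w(D):D\text{ is an efficient dominating set of }G\}$. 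If instead $G$ has no efficient dominating set, then every independent set has $c\le|V|-1$ and so $c(I^*)<|V|$. Thus a single test of whether $c(I^*)=|V|$ (equivalently $w^*(I^*)\ge M|V|-W$) decides efficient dominatability, and when it succeeds, comparing $w(I^*)$ with $k$ answers the instance.

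For the running time, $G^2$ is computed in ${\mathcal O}\big(\min\{nm+n,n^\omega\}\big)$ by the proof of Proposition~\ref{prop:1}; the values $c(x)=\deg_G(x)+1$, the number $W$, and all $w^*(x)$ are obtained in ${\mathcal O}(n+m)$ time, which is absorbed into that bound; the solver contributes $T(|G^2|)$; and reading off the answer from $I^*$ costs ${\mathcal O}(n)$, giving the stated ${\mathcal O}\big(\min\{nm+n,n^\omega\}+T(|G^2|)\big)$. The step needing the most care is the choice of $M$: one must verify that the strict inequality $M>2W$ indeed separates efficient dominating sets from all other independent sets, and that the new weights stay polynomially bounded (here $|w^*(x)|\le Mn+W={\mathcal O}(Wn)$), so that the instance $(G^2,w^*)$ handed to the solver has encoding length comparable to $|G^2|$ and the bound $T(|G^2|)$ genuinely applies.
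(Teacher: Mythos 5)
Your proof is correct, and while it shares the paper's central idea---encoding the efficient-domination constraint into \mwis weights on $G^2$ of the form $M\cdot|N_G[x]|-w(x)$ with a large multiplier $M$---it handles negative weights in a genuinely different and arguably cleaner way. The paper's reduction (Lemma~\ref{lem:wed-mwis-2}) uses $M=\max\{\sum_{x}w(x),k\}+1$ and its correctness proof needs non-negative weights (the step $\sum_{x\in I}|N_G[x]|\ge \frac{k'}{M}=|V|-\frac{k}{M}>|V|-1$ uses $w(I)\ge 0$); to reach that case the paper first shifts the weights (Lemma~\ref{lem:wed-mwis}), which preserves the problem only because all efficient dominating sets have the same cardinality $\gamma$ (the Bange--Barkauskas--Slater theorem), and computing $\gamma$ forces a preliminary solve of the unweighted ED problem, i.e., an extra \mwis call on $G^2$ via Proposition~\ref{prop:1}. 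Your choice $M=2W+1$ with $W=\sum_{x}|w(x)|$ makes the separation inequality $M(|V|-1)+W<M|V|-W$ sign-independent, so you need neither the shifting lemma, nor the equal-cardinality fact, nor the preliminary ED solve: a single \mwis computation on $(G^2,w^*)$ both decides existence and finds a minimum-weight efficient dominating set, within the same time bound. The one caveat is that your argument, as written, requires the optimal value (or an optimal set) $I^*$, whereas the paper's \mwis is formally a decision problem with a threshold; this is inessential and easy to repair---either observe that the concrete solvers used downstream return optimal sets (the paper's own Algorithm~\ref{algo:EDdc} does), or make your reduction decision-to-decision by asking whether $G^2$ has an independent set of $w^*$-weight at least $M|V|-\min\{k,W\}$, which by your separation argument (using $w(D)\le W$ for every efficient dominating set $D$ and $M>W+\min\{k,W\}$) holds if and only if $G$ has an efficient dominating set of weight at most $k$.
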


The proof of this theorem will rely on two auxiliary lemmas.
The first one shows how the problem can be reduced to the case of non-negative weights.

\begin{lemma}\label{lem:wed-mwis}
Let $(G = (V,E), w, k)$ be an instance to the \minwed problem such that $\mu = \min\left\{\min\{w(x)\mid x\in V\},k\right\}<0$.
Suppose that $G$ has an efficient dominating set, and let $\gamma$ be the common cardinality of all efficient dominating sets of $G$.
For all $x\in V$, let $\tilde w(x) = w(x)+|\mu|$, and let $\tilde k = k+|\mu|\cdot\gamma$.
Then, $(G, w, k)$  is a yes instance to the \minwed problem if and only if $(G, \tilde w, \tilde k)$ is a yes instance to the \minwed problem.
In addition, $\tilde \mu = \min\left\{\min\{\tilde w(x)\mid x\in V\},\tilde k\right\}\ge 0$.
\end{lemma}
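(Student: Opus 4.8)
The plan is to exploit the equal-cardinality property of efficient dominating sets (due to Bange \emph{et al.}, quoted in the introduction): \emph{every} efficient dominating set of $G$ has the same size $\gamma$. This is exactly what makes the weight shift uniform across all candidate solutions. First I would fix an arbitrary efficient dominating set $D$ of $G$ and compute how its total weight changes under the transformation $w\mapsto\tilde w$. Since $\tilde w(x)=w(x)+|\mu|$ for every $x$ and $|D|=\gamma$, I obtain
\[
\tilde w(D)=\sum_{x\in D}\bigl(w(x)+|\mu|\bigr)=w(D)+|\mu|\cdot\gamma\,.
\]
The crucial point is that the additive term $|\mu|\cdot\gamma$ does \emph{not} depend on the choice of $D$.

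Next I would observe that the definition $\tilde k=k+|\mu|\cdot\gamma$ is engineered precisely to cancel this constant shift. Indeed, for any efficient dominating set $D$ we have $w(D)\le k$ if and only if $\tilde w(D)=w(D)+|\mu|\cdot\gamma\le k+|\mu|\cdot\gamma=\tilde k$. Moreover, the transformation does not change which subsets of $V$ qualify as efficient dominating sets, since that notion is independent of the weights. Combining these two facts, the existence of an efficient dominating set of $w$-weight at most $k$ is equivalent to the existence of one of $\tilde w$-weight at most $\tilde k$, which establishes the claimed equivalence of the two \minwed instances.

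Finally, for the nonnegativity assertion I would treat the vertex weights and the threshold separately. For each vertex $x$, the definition of $\mu$ gives $\mu\le\min\{w(y)\mid y\in V\}$, hence $w(x)\ge\mu=-|\mu|$ and therefore $\tilde w(x)=w(x)+|\mu|\ge 0$. For the threshold, $\mu\le k$ yields $k\ge-|\mu|$, and combining this with $\gamma\ge 1$ gives
\[
\tilde k=k+|\mu|\cdot\gamma\ge -|\mu|+|\mu|\cdot\gamma=|\mu|\,(\gamma-1)\ge 0\,.
\]

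I do not expect a genuine obstacle here: the argument is essentially bookkeeping. The only delicate point is the reliance on the equal-cardinality property, without which the difference $w(D)-\tilde w(D)$ would vary with $D$ and the whole equivalence would collapse; beyond that, the harmless hypothesis $\gamma\ge 1$ (valid whenever $G$ has at least one vertex) is what secures $\tilde k\ge 0$ in the last step.
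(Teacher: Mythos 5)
Your proof is correct and follows essentially the same route as the paper's: both rely on the equal-cardinality property of efficient dominating sets to show that the weight of every efficient dominating set shifts by exactly $|\mu|\cdot\gamma$, so the shifted threshold $\tilde k$ preserves the yes/no answer. Your explicit verification of $\tilde\mu\ge 0$ (including the role of $\gamma\ge 1$) is more detailed than the paper's, which simply calls this fact clear.
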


\begin{proof}
Recall that if a graph $G$ has an efficient dominating set, then all efficient dominating sets of $G$ have the same cardinality, which equals the minimum cardinality of a dominating set of $G$. 
This implies that if we replace each $w(x)$ with $\tilde w(x) = w(x)+|\mu|$, then the weight of each efficient dominating set will increase by exactly $|\mu|\gamma$, where $\gamma$ is the common cardinality of all efficient dominating sets of $G$.
Consequently, $(G,w,k)$ is a yes instance if and only if $(G,\tilde w,k+|\mu|\gamma)$ is a yes instance.
The fact that $\tilde \mu\ge 0$ is clear.
\end{proof}

The second lemma deals with the case of non-negative weights.

\begin{lemma}\label{lem:wed-mwis-2}
Let $(G = (V,E), w, k)$ be an instance to the \minwed problem such that $0 \leq \min\{ \min\{ w(x) \mid x \in V \}, k \}$.
Let $M = \max\{\sum_{x\in V}w(x),k\}+1$, let $w'(x) = M\cdot |N_G[x]| - w(x)$, for all $x\in V$, and $k' = M\cdot |V|-k$.
Then, $(G, w, k)$  is a yes instance to the \minwed problem if and only if $(G^2, w', k')$ is a yes instance to the \mwis problem.
\end{lemma}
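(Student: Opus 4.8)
The plan is to lean on the characterization recorded in Observation~\ref{lem:eds-wis}, namely that efficient dominating sets of $G$ are exactly the independent sets of $G^2$ whose closed neighborhoods (in $G$) cover all of $V$. The starting point is the elementary fact that if $I$ is any independent set in $G^2$, then the closed neighborhoods $N_G[x]$ for $x\in I$ are pairwise disjoint: two vertices sharing a common neighbor in $G$, or adjacent in $G$, would be adjacent in $G^2$. Hence $\sum_{x\in I}|N_G[x]|\le |V|$, with equality precisely when these sets partition $V$, that is, precisely when $I$ is an efficient dominating set of $G$. For any independent set $I$ of $G^2$ I would then expand the transformed weight as
\[
w'(I)=\sum_{x\in I}\bigl(M\cdot|N_G[x]|-w(x)\bigr)=M\sum_{x\in I}|N_G[x]|-w(I),
\]
which cleanly separates the coverage term (scaled by the large factor $M$) from the original weight $w(I)$.

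For the forward implication, suppose $D$ is an efficient dominating set of $G$ with $w(D)\le k$. Then $D$ is independent in $G^2$ and $\sum_{x\in D}|N_G[x]|=|V|$, so the displayed identity gives $w'(D)=M|V|-w(D)\ge M|V|-k=k'$. Thus $D$ witnesses that $(G^2,w',k')$ is a yes instance of the \mwis problem.

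For the reverse implication, I would begin with an independent set $I$ of $G^2$ satisfying $w'(I)\ge k'$ and argue that it must in fact be an efficient dominating set of the required weight. Writing $S=\sum_{x\in I}|N_G[x]|\le |V|$, the inequality $w'(I)\ge k'$ rearranges to $M(|V|-S)\le k-w(I)\le k$, where the last step uses that $w$ is non-negative, so $w(I)\ge 0$. Since $M>k$ and $|V|-S$ is a non-negative integer, the chain $M(|V|-S)\le k<M$ forces $|V|-S=0$, that is, $S=|V|$. By the structural fact above, $I$ is then an efficient dominating set of $G$, and substituting $S=|V|$ back into $w'(I)=M|V|-w(I)\ge M|V|-k$ yields $w(I)\le k$, so $(G,w,k)$ is a yes instance.

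The only real content beyond bookkeeping, and hence the main point requiring care, is the role of the multiplier $M$ in the reverse direction. It must be chosen strictly larger than $k$ precisely so that a single unit of missing coverage, which lowers the $M$-scaled term by at least $M$, cannot be recouped by the comparatively small savings available in $w(I)$; this is what guarantees that every sufficiently heavy independent set of $G^2$ is forced to be a genuine efficient dominating set. Taking $M=\max\{\sum_{x\in V}w(x),k\}+1$ secures this and, as a bonus, keeps every $w'(x)=M\cdot|N_G[x]|-w(x)$ positive, while the hypothesis that $w$ and $k$ are non-negative ensures the integers involved behave as the argument requires.
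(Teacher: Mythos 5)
Your proof is correct and follows essentially the same route as the paper's: both directions rest on Observation~\ref{lem:eds-wis}, and your key step---that $M(|V|-S)\le k-w(I)\le k<M$ forces the non-negative integer $|V|-S$ to vanish---is just an algebraically equivalent rephrasing of the paper's argument that $\sum_{x\in I}|N_G[x]|\ge k'/M = |V|-k/M > |V|-1$ combined with integrality. Your version has the minor virtue of making explicit where the non-negativity of $w$ is used, which the paper leaves implicit.
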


\begin{proof}
On the one hand, if $D$ is an efficient dominating set in $G$ such that $w(D)\le k$, then, by Observation~\ref{lem:eds-wis}, $D$ is an independent set in $G^2$ such that $\sum_{x\in D}|N_G[x]| = |V|$.
Therefore, 
\[w'(D) = M\cdot \sum_{x\in D}|N_G[x]| - \sum_{x\in D}w(x)\ge M\cdot |V|-k = k'\,,\]
and $(G^2, w', k')$ is a yes instance to the \mwis problem.

On the other hand, let $I$ be an independent set in $G^2$ with $k'\le w'(I)$.
We claim that $\sum_{x\in I}|N_G[x]| = |V|$ and $w(I)\le k$.
Since $I$ is an independent set in $G^2$, it is an independent set in $G$ such that the closed neighborhoods (in $G$) of its elements are pairwise disjoint. 
Therefore, $\sum_{x\in I}|N_G[x]| \le |V|\,.$
Conversely, $k'\leq w'(I)$ implies $\sum_{x\in I}|N_G[x]|\geq \frac{k'}{M}=|V|-\frac{k}{M}>|V|-1,$ since $k<M$.
We thus have $\sum_{x\in I}|N_G[x]| = |V|$.
By Observation~\ref{lem:eds-wis}, $I$ is an efficient dominating set in $G$.
The inequality $w'(I)\ge k'$ is equivalent to
\[ M\cdot \sum_{x\in I}|N_G[x]| - \sum_{x\in I}w(x) \ge M\cdot |V|-k\,,\] 
which implies $w(I) = \sum_{x\in I}w(x)\le k$.
Thus, $(G, w, k)$ is a yes instance to the \minwed problem, as claimed.
\end{proof}

\begin{proof}[Proof of Theorem~\ref{thm:wed-mwis}]
Consider an instance $(G = (V,E), w, k)$ with $|V| = n$ and $|E| = m$ to the \minwed problem.
Let $\mu = \min\left\{\min\{w(x)\mid x\in V\},k\right\}$.
If $\mu <0$, we transform the instance, using Lemma~\ref{lem:wed-mwis} to an equivalent instance $(G, \tilde w, \tilde k)$ to the \minwed problem with non-negative weights and $\tilde k$.
The time complexity of this step is dominated by solving the ED problem in $G$, which, by Proposition~\ref{prop:1}, is of the order \hbox{${\mathcal O}\big(\min\{nm+n,n^\omega\}+T(|G^2|))$}.

To solve the \minwed problem on $(G, \tilde w, \tilde k)$, we apply Lemma~\ref{lem:wed-mwis-2}. 
The corresponding instance $(G^2, w', k')$ to the \mwis problem can be computed in time $O(n+m)$.

Summarizing, in order to solve the \minwed problem, we only need to compute $G^2$ and solve at most two instances of the \mwis problem on $G^2$.

The \maxwed problem can be reduced in time ${\mathcal O}(n+m)$ to the \minwed problem, since a tuple $(G,w,k)$ is a yes instance to the \maxwed problem if and only if $(G,-w,-k)$ is a yes instance to the \minwed problem.
Thus the result follows.
\end{proof}

\section{New Polynomial Cases of the Weighted Efficient Domination Problems}

In this section, we exploit the approach developed in Section~\ref{sec:wed-mwis} and develop new polynomial results for the \minwed and \maxwed problems.

\subsection{Dually chordal graphs}\label{sec:dually-chordal}

In~\cite{LT02}, Lu and Tang wrote that {\it ``$(\ldots)$ it would be of interest to know whether or not there is a polynomial-time algorithm to solve the weighted efficient domination problem on} ($\ldots$) {\it strongly chordal graphs.''}
Recently, Brandst\"adt \emph{et al.}~\cite{BLR} gave a linear-time algorithm for the efficient domination problem in the class of dually chordal graphs.
The algorithm is a modification of Frank's algorithm (Algorithm~\ref{algo:mwisChordal} below), which solves the \mwis problem for chordal graphs in linear time~\cite{Frank}. 
The modification allows to find a maximum weight independent set of~$G^2$ in linear time if $G$ is dually chordal, without computing its square.

\begin{algorithm}
\caption{\cite{Frank} Algorithm to find a maximum weight independent set in chordal graphs.}\label{algo:mwisChordal}

\KwIn{A chordal graph $G=(V,E)$ with $|V| = n$ and a vertex weight function~$\omega$.}
\KwOut{A maximum weight independent set $I$ of $G$.}

Find a perfect elimination ordering $\sigma = (v_1,\ldots,v_n)$ of $G$ and set $I \leftarrow \emptyset$. \label{line:mwisChorFindPEO}

\For{$i =  1$ \KwTo $n$}
{
    If $\omega(v_i) > 0$, mark $v$ and set $\omega(u) \leftarrow  \max(\omega(u) - \omega(v_i), 0)$ for all vertices $u \in N_{\sigma,i}(v_i)$.
}

\For{$i= n~{\bf downto}~1$}
{
    If $v_i$ is marked, set $I \leftarrow I \cup \{v_i\}$ and unmark all $u \in N_{\{v_1,\ldots, v_{i-1}\}}(v_i)$.
}

\Return{$I$}
\end{algorithm}

We will make a similar modification of Frank's algorithm so that the obtained algorithm (Algorithm~\ref{algo:EDdc} below) solves the \minwed problem for dually chordal graphs. 
The modification contains two major changes. 
First, we add a preprocessing step, transforming the given \minwed instance (with possibly negative weights) to an instance of the \mwis problem. 
Second, we rewrite Algorithm~\ref{algo:mwisChordal} to find a maximum weight independent set of the square of the given graph. 
By Lemma~\ref{lem:wed-mwis-2}, this will be a solution to the \minwed problem.

For the first step, we use Lemma~\ref{lem:wed-mwis}. 
For the second, we need the two following lemmas.

\begin{lemma}[\cite{BCD98}]\label{lem:MNO_linear}
    A maximum neighborhood ordering of $G$ which simultaneously is a perfect elimination ordering of $G^2$ can be found in linear time.
\end{lemma}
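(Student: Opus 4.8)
The statement packages two facts from the dually chordal toolbox of~\cite{BDCV,BCD98}, and the plan is to separate them cleanly. The structural fact is that the maximum neighborhood ordering of $G$, used as an elimination order, eliminates vertices that are simplicial in $G^2$; the algorithmic fact is that such an ordering is computable in linear time. I would first record the local mechanism. If $u$ is a maximum neighbor of $v$ in a graph $H$, then, since $N_H[w]\subseteq N_H[u]$ for every $w\in N_H[v]$ and $u\in N_H[v]$, the second neighborhood of $v$ collapses: $N_{H^2}[v]=\bigcup_{w\in N_H[v]}N_H[w]=N_H[u]$. As any two vertices of $N_H[u]$ share $u$ (or coincide with it), the set $N_H[u]$ is a clique of $H^2$, so $v$ is simplicial in $H^2$. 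Applying this at step $i$ of a maximum neighborhood ordering $\sigma=(v_1,\dots,v_n)$ of $G$, with $u_i$ the maximum neighbor of $v_i$ in $G_{\sigma,i}$, shows that $v_i$ is simplicial in $(G_{\sigma,i})^2$ and that its closed neighborhood there is exactly $N_{\sigma,i}[u_i]$.

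The step I expect to be the main obstacle is bridging from $(G_{\sigma,i})^2$ to the graph that actually governs a perfect elimination ordering of $G^2$, namely the induced subgraph $G^2[\{v_i,\dots,v_n\}]$. These differ: $v_i$ may have a later $G^2$-neighbor reached only through an already-eliminated vertex $v_\ell$ with $\ell<i$, so the set of later $G^2$-neighbors of $v_i$ can strictly contain $N_{\sigma,i}[u_i]\setminus\{v_i\}$, and one still has to show this larger set is a clique of $G^2$. Here I would bring in the hypergraph reformulation: for distinct $x,y$ one has $xy\in E(G^2)$ if and only if $N_G[x]\cap N_G[y]\neq\emptyset$, so $G^2$ is the intersection graph of the closed-neighborhood family $\mathcal{N}=\{N_G[v]\mid v\in V\}$. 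The defining property of dually chordal graphs is precisely that $\mathcal{N}$ is a hypertree, i.e.\ there is a spanning tree $T$ of $G$ in which every $N_G[v]$ induces a subtree; this is the nontrivial structural input I would cite from~\cite{BDCV,BCD98}. Combining the Helly property of subtrees of $T$ with the maximum-neighbor relation at each step then upgrades the local simpliciality to the global statement that the later $G^2$-neighborhood of each $v_i$ is a clique of $G^2$, i.e.\ $\sigma$ is a perfect elimination ordering of $G^2$. Making this upgrade precise, by tracking how the shortcuts through earlier vertices interact with $u_i$ and the tree $T$, is the technical heart of the argument.

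Finally, for the running time I would invoke the recognition machinery for dually chordal graphs: a maximum neighborhood ordering together with its maximum neighbors $u_i$, as well as the tree model $T$, is produced by a single variant of maximum cardinality search in linear time~\cite{BCD98}, and the maximum-neighbor condition is checked within the same sweep. The essential point for the complexity claim is that all of this is carried out directly on $G$ via the search labelling, and never materialises $G^2$, whose number of edges may be superlinear in $|G|$. Since the first two paragraphs show that any maximum neighborhood ordering of $G$ is at the same time a perfect elimination ordering of $G^2$, the ordering returned by the search witnesses both properties simultaneously, which is exactly the assertion of the lemma.
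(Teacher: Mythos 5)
Your first paragraph is correct and cleanly establishes the local fact: if $u$ is a maximum neighbor of $v$ in $H$, then $N_{H^2}[v]=N_H[u]$, which is a clique of $H^2$, so $v$ is simplicial in $H^2$. You also correctly identify the real difficulty: being simplicial in $(G_{\sigma,i})^2$ is not the same as being simplicial in $G^2[\{v_i,\ldots,v_n\}]$, because two later vertices can be $G^2$-adjacent only via an already-eliminated vertex. The problem is that you never resolve this difficulty. The second paragraph asserts that "combining the Helly property of subtrees of $T$ with the maximum-neighbor relation at each step then upgrades the local simpliciality to the global statement," and then explicitly concedes that making this precise "is the technical heart of the argument." A proof that names its central obstacle and defers it is not a proof; and your closing sentence ("the first two paragraphs show that any maximum neighborhood ordering of $G$ is at the same time a perfect elimination ordering of $G^2$") claims something the first two paragraphs do not establish. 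This is a genuine gap, not a stylistic shortfall.

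For the record, the paper does not prove this lemma either: it cites it from~\cite{BCD98}. But the bridge you are missing is exactly the paper's Lemma~\ref{lem:ijInE_iff_mjInE2}, and it is elementary, requiring no hypertree or Helly machinery: if $i<j$ and $m_i\neq v_j$, then $v_iv_j\in E^2$ if and only if $m_iv_j\in E$, proved by taking the \emph{rightmost} common $G$-neighbor $v_k$ of $v_i$ and $v_j$ and, in the troublesome case $k<i$, replacing $v_k$ by its own maximum neighbor $m_k$ (which lies to the right of $v_k$ and is still a common neighbor, contradicting the rightmost choice). From this, the later $G^2$-neighbors of $v_i$ all lie in $N_G[m_i]$ and hence form a clique of $G^2$, which is the global simpliciality you need. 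Note also a hypothesis your sketch ignores but this argument uses: the maximum neighbors must satisfy $m_i\neq v_i$ for $i<n$, which is a property guaranteed by the algorithm of~\cite{BCD98} and stated explicitly in the paper, not something every maximum neighborhood ordering automatically provides. Your third paragraph (linear-time computability via the search of~\cite{BCD98}, never materializing $G^2$) is fine as a citation, since that is how the paper itself treats the algorithmic content.
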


Additionally to finding a maximum neighborhood ordering $\sigma=(v_1,\ldots,v_n)$, the algorithm in \cite{BCD98} also computes a maximum neighbor $m_i$ for each vertex $v_i$ such that for all $i < n$ no vertex $v_i$ is its own maximum neighbor ($v_i \neq m_i$). 
This is necessary for the next lemma.

\begin{lemma}\label{lem:ijInE_iff_mjInE2}
Let $G=(V,E)$ be a graph with $G^2=(V,E^2)$ and a maximum neighborhood ordering $\sigma=(v_1, \ldots, v_n)$ where for all $1\le i< n$, $m_i$ is a maximum neighbor of $v_i$ with $v_i \neq m_i$. 
If $1 \leq i < j \leq n$ and $m_i \neq v_j$, then $v_iv_j \in E^2 \Leftrightarrow m_iv_j \in E$.
\end{lemma}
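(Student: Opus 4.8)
The plan is to prove the two implications separately, relying throughout on the defining property of a maximum neighbor: since $m_i$ is a maximum neighbor of $v_i$ in $G_{\sigma,i}$, we have $N_{\sigma,i}[w]\subseteq N_{\sigma,i}[m_i]$ for every $w\in N_{\sigma,i}[v_i]$. Taking $w=v_i$ gives in particular $N_{\sigma,i}[v_i]\subseteq N_{\sigma,i}[m_i]$, and since $i<j\le n$ forces $i<n$ and hence $m_i\neq v_i$, the vertex $m_i$ is a genuine neighbor of $v_i$, i.e.\ $v_im_i\in E$. I would record these facts first, as they are used in both directions.

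The implication $m_iv_j\in E\Rightarrow v_iv_j\in E^2$ is the easy direction. Here $m_i$ is adjacent to $v_i$ (just noted) and to $v_j$ (by assumption), and it is distinct from each of them, from $i<n$ and from the hypothesis $m_i\neq v_j$; moreover $v_i\neq v_j$ because $i<j$. Thus $m_i$ is a common neighbor of the two distinct vertices $v_i$ and $v_j$, which by the definition of the square places $v_iv_j$ in $E^2$.

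For the converse I would assume $v_iv_j\in E^2$ and split into the case $v_iv_j\in E$ and the case where $v_i,v_j$ are non-adjacent but share a common neighbor in $G$. If $v_iv_j\in E$, then $v_j\in N_{\sigma,i}(v_i)\subseteq N_{\sigma,i}[m_i]$ by the displayed inclusion, and since $v_j\neq m_i$ this yields $m_iv_j\in E$. The same conclusion follows at once whenever $v_i$ and $v_j$ have a common neighbor $u$ lying in $G_{\sigma,i}$: then $u\in N_{\sigma,i}(v_i)$, so $v_j\in N_{\sigma,i}[u]\subseteq N_{\sigma,i}[m_i]$, again giving $m_iv_j\in E$.

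The main obstacle is the remaining possibility that every common neighbor of $v_i$ and $v_j$ precedes $v_i$ in $\sigma$, since then the maximum-neighbor property of $v_i$---which only controls neighborhoods inside $G_{\sigma,i}$---cannot be invoked directly. To dispose of this I would run an extremal argument: among all common neighbors of $v_i$ and $v_j$ pick one, say $v_p$, appearing latest in $\sigma$, and show that $p\ge i$. Assuming $p<i$, both $v_i$ and $v_j$ lie in $N_{\sigma,p}(v_p)$, so the maximum-neighbor property of $v_p$ (valid because $p<n$ guarantees $m_p\neq v_p$, hence $m_p$ has index strictly larger than $p$) forces either $m_p\in\{v_i,v_j\}$, whence $N_{\sigma,p}[v_j]\subseteq N_{\sigma,p}[v_i]$ (or symmetrically) gives $v_iv_j\in E$, contrary to non-adjacency; or $m_p\notin\{v_i,v_j\}$, in which case $m_p$ is a common neighbor of $v_i$ and $v_j$ of index exceeding $p$, contradicting the choice of $v_p$. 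Either way $p\ge i$, so $v_p\in G_{\sigma,i}$ and we are back in the case already handled. I expect this extremal step, together with verifying that $m_p$ genuinely has index larger than $p$, to be the only delicate part; the rest is routine bookkeeping with closed neighborhoods.
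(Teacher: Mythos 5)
Your proof is correct and follows essentially the same route as the paper's: the same easy direction via the common neighbor $m_i$, the same split into the adjacent case and the common-neighbor case, and the same extremal argument choosing the rightmost common neighbor and deriving a contradiction (via its own maximum neighbor) when that neighbor precedes $v_i$. Your explicit treatment of the sub-case $m_p\in\{v_i,v_j\}$ is a minor point the paper handles more tersely, but the underlying argument is identical.
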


\begin{proof}
$(\Leftarrow):$ Vertex $v_j$ is adjacent in $G$ to $m_i$ ($m_iv_j \in E$). 
Thus, the distance between $v_i$ and $v_j$ is at most $2$. 
So $v_i$ and $v_j$ are also adjacent in $G^2$ ($v_iv_j \in E^2$).

$(\Rightarrow):$ Vertices $v_i$ and $v_j$ are adjacent in $G^2$ ($v_iv_j \in E^2$).
If $v_iv_j \in E$, then $m_iv_j\in E$.
Now assume that $v_iv_j \notin E$.
Then vertices $v_i$ and $v_j$ have a common neighbor $v_k$ in $G$, choose the rightmost such vertex (that is,
the one maximizing the value of $k$).
We distinguish between two cases:
\begin{enumerate}[(i)]
    \item $i<k$.
    If $m_i= v_k$, then $m_iv_j \in E$ so we may assume that $m_i\neq v_k$.
    By definition $m_i$ is adjacent to all neighbors of $v_k$ in $G_{\sigma,i}$. 
    This includes $v_j$.

    \item $k<i$. In this case any maximum neighbor $m_k$ of $v_k$ satisfies $m_k\neq v_i$ (since $v_kv_j\in E$ but $v_iv_j\not \in E$), $v_im_k \in E$ and $m_kv_j \in E$. 
    In particular, $m_k$ is a common neighbor of $v_i$ and $v_j$. 
    Since $m_k = v_p$ for some $p>k$, this contradicts the choice of $v_k$.
\end{enumerate}
\end{proof}

\begin{algorithm}[ht]
\label{algo:EDdc}
\caption{A linear time algorithm for the \minwed problem in dually chordal graphs}

\KwIn{An instance $(G,w,k)$ of the \minwed problem where $G=(V,E)$ is dually chordal graph.}
\KwOut{An efficient dominating set $D$ in $G$ with $w(D)\le k$, if one exists, {\sc No}, otherwise.}

\BlankLine
{

    $\mu \leftarrow  \min\left\{\min\{w(v)\mid v\in V\},k\right\}$ \nllabel{line:posWeightsStart}

    \If{$\mu < 0$}
    {
        Solve the efficient domination problem on $G$.\nllabel{line:compgamma}

        \If{$G$ has no efficient dominating set}
        {
            \Return{\sc No};
        }
        \Else
        {
            Let $\gamma$ be the common cardinality of all efficient dominating sets of $G$.
        }

        For all $v \in V$ set $w(v) \leftarrow  w(v) + |\mu|$

        $k \leftarrow  k + |\mu| \cdot \gamma$
        \nllabel{line:posWeightsEnd}
    }

    $D \leftarrow  \emptyset$, $M \leftarrow  \max\{\sum_{x\in V}w(x),k\}+1$, $k'\leftarrow  M \cdot |V| - k$. \nllabel{line:compMk}

\ForAll{$v \in V$\nllabel{line2} }
    {
         Set $\omega(v) \leftarrow  M \cdot |N[v]| - w(v)$ and $\omega_p(v) \leftarrow  0$.\nllabel{line:compOmega}

        Set $v$ unmarked and not blocked.\nllabel{line4}
    }

    Find a maximum neighborhood ordering $\sigma=(v_1,\ldots,v_n)$
    of $G$ which simultaneously is a perfect elimination ordering of $G^2$,
    with the corresponding maximum
    neighbors $(m_1, \ldots, m_n)$ where $v_i \neq m_i$ for $1 \leq i < n$. \nllabel{line:compMNO}

    \For{$i =  1$ \KwTo $n$\nllabel{line6}}
    {
        For all $u \in N_{\sigma,i}[v_i]$ set $\omega(v_i) \leftarrow  \omega(v_i) - \omega_p(u)$\nllabel{line7}.

        \If{$\omega(v_i) > 0$\nllabel{line8} }
        {
             Mark $v_i$ and set $\omega_p(m_i) \leftarrow  \omega_p(m_i) + \omega(v_i)$.\nllabel{line9}
        }
    }

    \For{$i= n~{\bf downto}~1$\nllabel{line10} }
    {
        \If{$v_i$ is marked and $m_i$ is not blocked\nllabel{line11} }
        {
             Set $D \leftarrow D \cup \{v_i\}$ and block all $u \in N_{G}(v_i)$.\nllabel{line:compMWISend}
        }
    }

    \If{$\sum_{v \in D} \left(M \cdot |N[v]| - w(v)\right)\geq k'$\nllabel{line13} }
    {
        \Return{$D$};\nllabel{line14}
    }
    \Else{\nllabel{line15}
    \Return{\sc No};\nllabel{line16}
    }
}
\end{algorithm}

\begin{theorem}\label{thm:meed}
Algorithm~\ref{algo:EDdc} solves the \minwed problem for dually chordal graphs in linear time.
\end{theorem}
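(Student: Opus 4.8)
The plan is to prove correctness and the linear running time separately, in both cases reducing the analysis to the behaviour of Frank's algorithm (Algorithm~\ref{algo:mwisChordal}) run on the square $G^2$.

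For correctness I would follow the pipeline built into Algorithm~\ref{algo:EDdc}. First, lines~\ref{line:posWeightsStart}--\ref{line:posWeightsEnd} implement exactly the weight shift of Lemma~\ref{lem:wed-mwis}: if the minimum relevant value $\mu$ is negative, the instance is replaced by the equivalent non\hyp negative instance $(G,\tilde w,\tilde k)$, using that every efficient dominating set of a dually chordal graph has the same cardinality $\gamma$ and that the efficient domination problem is solvable in linear time on this class (line~\ref{line:compgamma}); if no efficient dominating set exists, the original instance is a trivial No-instance, which the algorithm reports. After this step all weights are non\hyp negative, so Lemma~\ref{lem:wed-mwis-2} applies: with $\omega(v)=M\cdot|N[v]|-w(v)=w'(v)$ and $k'=M\cdot|V|-k$, the instance $(G,w,k)$ of \minwed is a Yes-instance if and only if $G^2$ has an independent set of $w'$-weight at least $k'$. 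Hence it suffices to show that lines~\ref{line:compMNO}--\ref{line:compMWISend} compute a maximum $w'$-weight independent set $D$ of $G^2$; the final test in lines~\ref{line13}--\ref{line16} then returns $D$ or No according to whether $w'(D)\ge k'$, which by Lemma~\ref{lem:wed-mwis-2} and Observation~\ref{lem:eds-wis} is exactly the decision we want, since any independent set of $G^2$ of weight at least $k'$ is forced to be an efficient dominating set with $w(D)\le k$.

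The heart of the argument is thus to prove that the two loops simulate Frank's algorithm on $G^2$ with the ordering $\sigma$, which by Lemma~\ref{lem:MNO_linear} is a perfect elimination ordering of $G^2$; since Frank's algorithm is correct on chordal graphs, this produces a maximum\hyp weight independent set of $G^2$. The enabling observation is a neighbourhood identity coming from Lemma~\ref{lem:ijInE_iff_mjInE2} together with the maximum\hyp neighbour property: for each $i$ with $v_i\neq m_i$, the neighbours of $v_i$ that follow it in $\sigma$ \emph{in $G^2$} are exactly the vertices of $N_{\sigma,i}[m_i]\setminus\{v_i\}$. This is what lets the algorithm avoid constructing $G^2$. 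Instead of subtracting the residual weight of a marked vertex $v_i$ from each of its later $G^2$-neighbours directly, line~\ref{line9} funnels that residual into the single counter $\omega_p(m_i)$, and line~\ref{line7} later redistributes the counters over closed $G$-neighbourhoods; in the backward loop the same identity underlies blocking the $G$-neighbours of a chosen vertex and testing whether $m_i$ is blocked, which is meant to detect precisely whether some later $G^2$-neighbour of $v_i$ is already in $D$ (the analogue of the unmarking step of Frank's backward pass).

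The main obstacle, I expect, is proving that this counter mechanism faithfully reproduces Frank's residual weights. The difficulty is that a later $G^2$-neighbour $v_t$ of $v_i$ may occur \emph{after} $m_i$ in $\sigma$; then the weight parked at $\omega_p(m_i)$ is not read off directly by $v_t$, because $m_i\notin N_{\sigma,t}[v_t]$, and the contribution must instead propagate along the chain $v_i\to m_i\to m_{(\cdot)}\to\cdots$ of iterated maximum neighbours. Showing that this propagation delivers exactly the right amount is where the maximum\hyp neighbour property does the work: because $N_{\sigma,i}[w]\subseteq N_{\sigma,i}[m_i]$ for every $w\in N_{\sigma,i}[v_i]$, the relevant later neighbourhoods are nested compatibly with the chain, so nothing is lost or double\hyp counted. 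I would formalise this by a single induction on the processing index, proving the invariant that when $v_t$ is reached its current value $\omega(v_t)$ equals the residual weight assigned to $v_t$ by Frank's forward pass on $G^2$; here the absence of the clamp ``$\max(\cdot,0)$'' in line~\ref{line7} is harmless, since residuals only decrease, so the sign of $\omega(v_t)$, and hence the marking decision, is unaffected. A symmetric and more delicate point in the backward loop is to verify that the block/``$m_i$ not blocked'' test treats the boundary case $v_t=m_i$ correctly, i.e.\ that $v_i$ is kept out of $D$ exactly when \emph{some} later $G^2$-neighbour, \emph{possibly $m_i$ itself}, already lies in $D$; I would check this case explicitly, as it is the one where the distinction between $G$- and $G^2$-adjacency is sharpest.

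Finally, for the running time I would bound each phase. Computing $\sigma$ together with the maximum neighbours is linear by Lemma~\ref{lem:MNO_linear}, and the preprocessing efficient\hyp domination call is linear on dually chordal graphs. In the forward loop, line~\ref{line7} costs $O(|N_{\sigma,i}[v_i]|)=O(\deg_G(v_i)+1)$, and in the backward loop the blocking in line~\ref{line:compMWISend} costs $O(\deg_G(v_i))$, so both loops together run in $\sum_{v}O(\deg_G(v)+1)=O(n+m)$. Summed over all phases this yields the claimed linear bound, the crucial point being that the $\omega_p$-trick keeps every access inside $G$ rather than inside the much denser $G^2$.
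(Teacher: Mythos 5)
Your reduction pipeline and time analysis match the paper's proof, and you correctly isolate the two delicate points in the simulation of Frank's algorithm; but your resolution of the first point is a genuine error, and the second is left unresolved. For the forward loop you read line~\ref{line7} as subtracting $\omega_p(u)$ only for $u\in N_{\sigma,i}[v_i]$, and you then claim that a residual parked at $m_i$ reaches a later $G^2$-neighbour $v_t$ lying after $m_i$ in $\sigma$ by ``propagation along the chain of iterated maximum neighbours'', so that $\omega(v_t)$ always equals Frank's residual. That invariant is false: when $m_i$ is processed, the amount parked at $m_i$ is merely subtracted from $\omega(m_i)$; what $m_i$ afterwards parks at its own maximum neighbour is its \emph{own} residual, not the contributions it absorbed, so those contributions are lost, not forwarded. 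Concretely, take the path $a\,b\,c\,d\,e$ with $\sigma=(a,e,b,d,c)$ and maximum neighbours $(b,d,c,c,c)$ (a valid maximum neighbourhood ordering that is simultaneously a perfect elimination ordering of $G^2$), and weights $\omega(a)=\omega(e)=1$, $\omega(b)=\omega(d)=\omega(c)=3$. Frank's algorithm on $G^2$ reduces $\omega(c)$ by the residuals of $a$, $e$ and $b$, leaving $0$, so $c$ is unmarked and the output is $\{b,e\}$ of weight $4$; under your reading, $c$ subtracts only $\omega_p(c)=2$, ends at $1>0$, is marked, and the backward pass returns the suboptimal set $\{c\}$ of weight $3$. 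The paper's proof avoids this entirely: its text decrements $\omega(v_j)$ by $\omega_p(u)$ for all $u\in N[v_j]$ (the \emph{full} closed $G$-neighbourhood), and then Lemma~\ref{lem:ijInE_iff_mjInE2} guarantees $m_i\in N_G[v_t]$ for \emph{every} later $G^2$-neighbour $v_t$ of $v_i$, regardless of where $m_i$ sits in $\sigma$; each residual is therefore read directly, exactly once, and no propagation argument is needed, while $\sum_v |N_G[v]|=O(n+m)$ keeps the loop linear.

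For the backward loop you flag the boundary case where the selected vertex is itself $m_i$, but ``I would check this case explicitly'' is not a proof, and the case genuinely bites: in the example above, after $c$ is selected only $N_G(c)=\{b,d\}$ is blocked, so the marked vertex $b$ has $m_3=c$ unblocked and would be added to $D$ next to its $G$-neighbour $c$. The resolution, implicit in the paper's assertion that all $G^2$-neighbours of a selected vertex become excluded, is that blocking must cover the closed neighbourhood $N_G[v_j]$; then Lemma~\ref{lem:ijInE_iff_mjInE2} gives $m_i\in N_G[v_j]$ for every earlier marked $G^2$-neighbour $v_i$ of $v_j$, and conversely a blocked $m_i$ witnesses a selected later $G^2$-neighbour of $v_i$, so the test in line~\ref{line11} exactly simulates Frank's unmarking step. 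Your observation that the missing clamp $\max(\cdot,0)$ is harmless for the marking decisions is correct (and more careful than the paper on this point), but without repairing the two items above your proof does not go through.
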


\begin{proof}

To get Algorithm~\ref{algo:EDdc} we extended the Algorithm~\ref{algo:mwisChordal} by lines~\ref{line:posWeightsStart}--\ref{line:posWeightsEnd} to ensure non-negative weights. 
Additionally, we added the calculation for~$M$ and~$k'$ (line~\ref{line:compMk}), and defined $\omega(v)$ now as $M \cdot |N[v]| - w(v)$ instead of $|N[v]|$ (line~\ref{line:compOmega}). 
Therefore, based on Lemma~\ref{lem:wed-mwis}, we first created an instance $(G,w,k)$ for the \minwed problem with non-negative weights, and then by Lemma~\ref{lem:wed-mwis-2} an instance $(G,w',k')$ for the \mwis problem on $G^2$ (with $w'(v)=\omega(v)$).

Next, the \mwis instance is solved in the lines~\ref{line:compMNO}--\ref{line:compMWISend} in linear time.
To achieve this, we rewrite Algorithm~\ref{algo:mwisChordal} to work on the dually chordal graph instead of its chordal square.

Based on Lemma~\ref{lem:MNO_linear}, line~\ref{line:compMNO} of Algorithm~\ref{algo:EDdc} computes a perfect elimination ordering of $G^2$.

Let $v_i$ and $v_j$ be adjacent in $G^2$ and $i <j$. Now Lemma~\ref{lem:ijInE_iff_mjInE2} allows to modify the two loops. 
For the first loop (line~\ref{line6}), there is an extra vertex weight $\omega_p$. 
When $v_i$ is marked, instead of decrementing the weights $\omega$ of the neighbors of $v_i$ (and their neighbors), $\omega_p$ of $v_i$'s maximum neighbour~$m_i$ is incremented by $\omega(v_i)$.
Now before comparing $\omega(v_j)$ to $0$, $\omega(v_j)$ is decremented by $\omega_p(u)$ for all $u \in N[v_j]$. 
Because $m_i$ is adjacent to $v_j$ (Lemma~\ref{lem:ijInE_iff_mjInE2}), this ensures that each time the weight~$\omega(v_j)$ is compared to $0$, it has the same value as it would have in Algorithm~\ref{algo:mwisChordal}.

For the second loop (line~\ref{line10}) the argumentation works similarly. 
After selecting a vertex~$v_j$ (i.e., $v_j \in D$), all its neighbors in $G^2$ are blocked. 
Thus by Lemma~\ref{lem:ijInE_iff_mjInE2}, if a vertex~$v_i$ is adjacent in $G^2$ to a selected vertex, then the maximum neighbor $m_i$ is blocked.

It follows that after line~\ref{line:compMWISend}, $D$ is a solution to the instance $(G,w',k')$ for the \mwis problem on $G^2$ created earlier.
    
Thus, the set~$D$ is a solution for the given \minwed instance if and only if $\sum_{v \in D} \big(M \cdot |N[v]| - w(v)\big) \geq k'$ (lines~\ref{line13}--\ref{line16}).

\medskip
Solving the efficient domination problem (line~\ref{line:compgamma}) and finding a maximum neighborhood ordering (line~\ref{line:compMNO}) can be done in linear time \cite{BCD98,BLR}. 
Also the overall runtime of lines~\ref{line6}--\ref{line:compMWISend} is bounded by the number of vertices and edges in $G$. 
Thus, Algorithm~\ref{algo:EDdc} runs in linear time.
\end{proof}

As explained at the end of the proof of Theorem~\ref{thm:wed-mwis}, the \maxwed problem can be reduced in time ${\mathcal O}(n+m)$ to the \minwed problem by negating the weights. 
Therefore, the \maxwed problem can also be solved in linear time on dually chordal graphs.

\subsection{AT-free graphs}
\label{sec:AT-free}

We now consider the class of AT-free graphs. 
Recall the following result due to Chang {\it et al.}~\cite{CHK03}.

\begin{theorem}[\cite{CHK03}]\label{ATGk}
Every proper power of an AT-free graph is a cocomparability graph.
\end{theorem}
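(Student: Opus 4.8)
The plan is to verify that $G^k$ (with $k \ge 2$, this being what ``proper power'' means) satisfies the classical ordering characterization of cocomparability graphs: a graph $H$ is a cocomparability graph if and only if it admits a linear vertex ordering $\sigma$ that is \emph{umbrella-free}, meaning that whenever $a <_\sigma b <_\sigma c$ and $ac \in E(H)$, at least one of $ab,bc$ is also an edge of $H$. Since a disjoint union of cocomparability graphs is again a cocomparability graph (the complement of a disjoint union is a join of comparability graphs, and comparability graphs are closed under join), and since the connected components of $G^k$ are exactly the $k$-th powers of the components of $G$, I would first reduce to the case where $G$ is connected. It then suffices to exhibit one linear order $\sigma$ of $V(G)$ for which no \emph{umbrella} occurs in $G^k$, i.e.\ no triple $a <_\sigma b <_\sigma c$ with $d_G(a,c) \le k$ but $d_G(a,b) > k$ and $d_G(b,c) > k$.

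The engine of the argument is the separation property equivalent to AT-freeness: if $G$ is AT-free and $u,v,w$ are pairwise non-adjacent, then one of them, say $v$, \emph{separates} the other two, meaning that $u$ and $w$ lie in different components of $G - N[v]$. I would combine this with the following distance estimate, which is where the hypothesis $k \ge 2$ is used. If $u,w \notin N[v]$ and $v$ separates $u$ from $w$, then every shortest $u$--$w$ path meets $N(v)$, so choosing a contact vertex $p$ on such a path gives
\[ d_G(u,w) = d_G(u,p) + d_G(p,w) \ge (d_G(u,v)-1) + (d_G(v,w)-1) = d_G(u,v) + d_G(v,w) - 2. \]
In a would-be umbrella one has $d_G(a,b),d_G(b,c) \ge k+1$, so if the median vertex $b$ separated $a$ from $c$ (note $a,c\notin N[b]$ since $k\ge 2$) we would obtain $d_G(a,c) \ge 2(k+1)-2 = 2k \ge k+2 > k$, contradicting $d_G(a,c) \le k$. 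Hence the median of an umbrella can never be the separator of its two endpoints.

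To exploit this I would choose $\sigma$ so that betweenness in the order is compatible with separation, that is, so that for every triple $x <_\sigma y <_\sigma z$ only the median $y$ can separate the other two. Given such an order, take an umbrella $a <_\sigma b <_\sigma c$ in $G^k$. If $d_G(a,c) \ge 2$, then $a,b,c$ are pairwise non-adjacent, AT-freeness forces one of them to separate the other two, by the choice of $\sigma$ this separator must be the median $b$, and the displayed inequality rules that out; thus no umbrella with $d_G(a,c)\ge 2$ can occur.

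The main obstacle is twofold and lies exactly in making the previous paragraph rigorous. First, one must actually construct a linear order of $V(G)$ in which, for every triple, the separator (when it exists) is the median; this is the point at which the \emph{global} linear structure of AT-free graphs---dominating pairs and LBFS-type orderings in the spirit of \cite{COS97}---has to be invoked rather than just the triple-separation lemma, and verifying that such an order linearizes the separation betweenness is the crux of the whole proof. Second, the degenerate umbrellas with $d_G(a,c)=1$ escape the triple-separation lemma, since then $a$ and $c$ are adjacent and the triple is not pairwise non-adjacent; these must be excluded by a separate argument exploiting that $b$ is nonetheless at distance $>k\ge 2$ from both endpoints of the edge $ac$, again combining AT-freeness with the chosen order. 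I expect the order-construction step to be the genuinely hard part, the distance computations being routine once it is in place.
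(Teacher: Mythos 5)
Your proposal leaves open precisely the steps that carry the content of the theorem, and the first of them, in the form you state it, cannot be carried out. You ask for a linear order $\sigma$ such that for \emph{every} triple $x <_\sigma y <_\sigma z$, only the median $y$ can separate the other two. No such order exists in general, even for AT-free graphs: in the claw $K_{1,3}$ with center $p$ and leaves $u,v,w$, deleting $N[u]$ removes $p$ and leaves $v$ and $w$ isolated, so \emph{each} leaf separates the other two; whatever linear order you choose, some non-median vertex of this triple is a separator. The requirement must therefore be weakened to the triples that can actually form umbrellas in $G^k$ (pairwise $G$-distance at least $k+1$ between $a,b$ and between $b,c$), and the existence of an order linearizing separation on exactly those triples is not an auxiliary fact that can be deferred to ``dominating pairs and LBFS-type orderings in the spirit of \cite{COS97}'' --- constructing such an order and verifying this property \emph{is} the proof, and you give no argument for it. The second gap, the degenerate umbrellas with $d_G(a,c)=1$, is likewise not a corner case but the heart of the matter. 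Note that the distance argument you do complete never actually uses $k\ge 2$: the contradiction $d_G(a,c)\ge 2(k+1)-2 = 2k > k$ and the non-adjacencies $a,c\notin N[b]$ need only $k\ge 1$. So the entire force of the hypothesis $k\ge 2$ must be spent in the two steps you omit. Consistently with this, for $k=1$ the statement is false ($C_5$ is AT-free but is not a cocomparability graph), and in $C_5$ every umbrella of every ordering has adjacent endpoints, i.e., is exactly of the degenerate type you do not handle; any correct proof must make essential use of $k \ge 2$ there.

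For calibration: the paper does not prove this theorem at all; it quotes it from Chang, Ho and Ko \cite{CHK03}, where a cocomparability ordering of $G^k$ is constructed and verified. The parts of your sketch that are complete --- the reduction to connected graphs, the umbrella-free-ordering characterization of cocomparability graphs, the separation property of independent triples in AT-free graphs, and the estimate $d_G(u,w)\ge d_G(u,v)+d_G(v,w)-2$ when $v$ separates $u$ from $w$, hence that the median of an umbrella cannot separate its endpoints --- are all correct, but they are the routine local computations. The global construction of the ordering, which you yourself identify as the crux, is exactly what is missing.
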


Therefore, by Theorem~\ref{thm:wed-mwis}, a polynomial time algorithm for the \minwed and \maxwed problems in AT-free graphs will follow if the \mwis problem is solvable in polynomial time in the class of cocomparability graphs.
The \mwis problem in the class of cocomparability graphs is equivalent to the maximum weight clique problem in the class of comparability graphs, for which the following result is known.

\begin{theorem}[\cite{Gol04,MS99}]
The maximum weight clique problem can be solved in linear time on comparability graphs.
\end{theorem}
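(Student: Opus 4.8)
The plan is to exploit the defining property of a comparability graph $G=(V,E)$: by definition $G$ admits a \emph{transitive orientation}, that is, an acyclic, transitively closed orientation of $E$. Directing $u\to v$ whenever the edge $uv$ is oriented from $u$ to $v$ then defines a strict partial order $\prec$ on $V$ for which $uv\in E$ holds exactly when $u$ and $v$ are $\prec$-comparable. The key structural observation is that the cliques of $G$ correspond exactly to the chains of $\prec$: a set $C\subseteq V$ is a clique iff its vertices are pairwise adjacent, hence pairwise comparable, and ordering them as $x_1\prec x_2\prec\cdots\prec x_k$ turns $C$ into a directed path $x_1\to x_2\to\cdots\to x_k$ in the oriented digraph $D$. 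Conversely, by transitivity every directed path in $D$ induces a clique of $G$. Thus the maximum weight clique problem on $G$ is equivalent to computing a maximum weight directed path (summing vertex weights) in the directed acyclic graph $D$.

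Given $D$ together with a topological order of its vertices, I would solve this longest-path variant by a single dynamic-programming sweep. Maintaining for each vertex $v$ the value $W(v)$, the maximum total weight of a clique whose $\prec$-largest element is $v$, one has the recurrence
\[
W(v) \;=\; w(v) + \max\Big(0,\ \max_{u:\,u\to v} W(u)\Big),
\]
where the inner maximum ranges over the in-neighbors of $v$ in $D$; the optimum weight is $\max\{0,\max_{v\in V}W(v)\}$, and an optimal clique is recovered via back-pointers. Because the vertices are processed in topological order, each vertex and each arc of $D$ is examined exactly once, so this phase runs in time $\mathcal{O}(n+m)$.

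The crux of the linear-time bound lies not in this dynamic program but in producing the transitive orientation, and with it a topological order of $D$, within the same time bound. For this I would invoke the linear-time transitive orientation algorithm of McConnell and Spinrad~\cite{MS99} (see also~\cite{Gol04}), which, for a graph guaranteed to be a comparability graph, outputs an orientation of $E$ realizing a partial order; a topological order of the resulting DAG then follows in $\mathcal{O}(n+m)$ by a standard depth-first search. I expect this orientation step to be the main technical obstacle, as it rests on modular decomposition machinery, whereas the reduction to longest paths and the subsequent dynamic program are entirely routine.
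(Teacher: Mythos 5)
Your proposal is correct and follows essentially the same route as the paper: the paper's justification is precisely that Golumbic's algorithm (which is the clique--chain correspondence plus a dynamic-programming sweep over the transitively oriented DAG, exactly as you describe) runs in linear time once a transitive orientation is available, and that such an orientation is computable in linear time by the methods of McConnell and Spinrad. You have merely spelled out the internals of Golumbic's algorithm that the paper leaves as a citation, including the correct caveat that the McConnell--Spinrad orientation step presupposes the input is a comparability graph.
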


The algorithm given by Golumbic~\cite{Gol04} requires a transitive orientation of a comparability graph $G$ as input. 
Such an orientation can be found in linear time using methods of McConnell and Spinrad~\cite{MS99}. Consequently:

\begin{theorem}\label{MWIS:cocomp}
The \mwis problem on instances $(G,w,k)$ such that $G$ is a cocomparability graph can be solved in time ${\mathcal O}(|V(G)|+|E(\overline{G})|)$.
\end{theorem}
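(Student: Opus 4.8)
The plan is to prove the statement as a direct composition of the independent-set/clique duality with the two cited linear-time results for comparability graphs. First I would record the elementary but crucial correspondence between the two problems: a set $I\subseteq V(G)$ is independent in $G$ if and only if it induces a clique in $\overline{G}$, since non-adjacency in $G$ is exactly adjacency in $\overline{G}$. Consequently $w(I)$ is the same quantity whether $I$ is viewed as an independent set of $G$ or as a clique of $\overline{G}$, and the maximum weight of an independent set of $G$ equals the maximum weight of a clique of $\overline{G}$. Because $G$ is a cocomparability graph, $\overline{G}$ is by definition a comparability graph, so the \mwis instance $(G,w,k)$ is a yes instance if and only if $\overline{G}$ admits a clique $C$ with $w(C)\ge k$, i.e.\ if and only if the maximum weight clique of $\overline{G}$ has weight at least $k$.

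Next I would assemble the algorithm on the graph $\overline{G}$, which is the object on which all substantial computation takes place. Using the method of McConnell and Spinrad~\cite{MS99}, compute a transitive orientation of $\overline{G}$; under this orientation the cliques of $\overline{G}$ are exactly the chains of the associated partial order, so a maximum weight clique corresponds to a maximum weight chain, i.e.\ a longest weighted path in the resulting acyclic orientation. Running Golumbic's algorithm~\cite{Gol04} on $\overline{G}$ equipped with this orientation then yields a maximum weight clique $C$ of $\overline{G}$, and returning ``yes'' precisely when $w(C)\ge k$ settles the decision question. A small point to dispatch is the sign of the weights: since the empty set is always an independent set of weight $0$ and independent sets are closed under taking subsets, it suffices to run the clique routine on the subgraph of $\overline{G}$ induced by the vertices of positive weight and compare the outcome against $\max\{0,k\}$, so negative weights cause no difficulty.

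For the running time I would argue that each of the two invoked subroutines runs in time linear in the size of $\overline{G}$, namely $\mathcal{O}(|V(\overline{G})|+|E(\overline{G})|)=\mathcal{O}(|V(G)|+|E(\overline{G})|)$, and that the remaining steps — reading off $w(C)$ and comparing it with $k$ — are subsumed by this bound. The claimed complexity then follows.

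The step I expect to be the main obstacle is the running-time accounting rather than the correctness, which is essentially immediate from the independent-set/clique duality. Specifically, the bound is stated in terms of $|E(\overline{G})|$ and not $|E(G)|$, so I must ensure that $\overline{G}$ is the graph handed to both cited routines and that neither they nor any preprocessing step — in particular the construction of the working representation of $\overline{G}$ and the translation of the output clique back into an independent set of $G$ — incurs a cost proportional to $|E(G)|$; every charge must be made against the size of $\overline{G}$. This is exactly the point that lets one replace the generic linear-in-the-input estimate by the sharper $\mathcal{O}(|V(G)|+|E(\overline{G})|)$ appearing in the statement.
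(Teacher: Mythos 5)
Your proposal is correct and follows essentially the same route as the paper: reduce {\sc MWIS} on the cocomparability graph $G$ to the maximum weight clique problem on the comparability graph $\overline{G}$, compute a transitive orientation of $\overline{G}$ in linear time via McConnell--Spinrad, and run Golumbic's linear-time maximum weight clique algorithm on the oriented graph, giving the stated ${\mathcal O}(|V(G)|+|E(\overline{G})|)$ bound. Your extra remark on dispatching negative weights by restricting to positive-weight vertices is a detail the paper leaves implicit, but it does not alter the approach.
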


Theorems~\ref{thm:wed-mwis}, \ref{ATGk} and~\ref{MWIS:cocomp} imply the following result.

\begin{theorem}\label{thm:AT-free}
The \minwed and \maxwed problems are solvable in the class of AT-free graphs in time ${\mathcal O}\big(\min\{nm+n^2,n^\omega\}\big)$.
\end{theorem}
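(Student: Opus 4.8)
The plan is to instantiate the general reduction of Theorem~\ref{thm:wed-mwis} with $\cal C$ equal to the class of AT-free graphs, and to supply the missing quantity $T(|G^2|)$ — the time needed to solve the \mwis problem on the square of an AT-free graph — using Theorems~\ref{ATGk} and~\ref{MWIS:cocomp}. Once $T(|G^2|)$ is pinned down, the theorem follows by substitution, modulo a routine simplification of the resulting time expression.

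First I would invoke Theorem~\ref{ATGk}: since the square $G^2$ is a proper power of $G$, for every AT-free graph $G$ the graph $G^2$ is a cocomparability graph. Hence Theorem~\ref{MWIS:cocomp} applies to $G^2$, so the \mwis problem on $G^2$ is solvable in time ${\mathcal O}(|V(G^2)| + |E(\overline{G^2})|)$. Because $|V(G^2)| = n$ and $|E(\overline{G^2})| \le \binom{n}{2}$, this yields $T(|G^2|) = {\mathcal O}(n^2)$. Here I would note that forming $\overline{G^2}$ and running the McConnell--Spinrad transitive orientation both fit within this ${\mathcal O}(n^2)$ budget, so nothing outside the stated bound is incurred.

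Substituting into Theorem~\ref{thm:wed-mwis}, the \minwed and \maxwed problems on AT-free graphs are solvable in time ${\mathcal O}\big(\min\{nm+n,\,n^\omega\} + n^2\big)$, and it remains to check that this equals ${\mathcal O}\big(\min\{nm+n^2,\,n^\omega\}\big)$. I would bound the expression against each term of the target minimum separately. On one hand, $\min\{nm+n,\,n^\omega\} + n^2 \le (nm+n) + n^2 = {\mathcal O}(nm + n^2)$. On the other hand, since $\omega \ge 2$ gives $n^2 \le n^\omega$, we also have $\min\{nm+n,\,n^\omega\} + n^2 \le n^\omega + n^\omega = {\mathcal O}(n^\omega)$. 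A quantity bounded up to a constant by both $nm+n^2$ and $n^\omega$ is bounded by their minimum, which is exactly the claimed running time.

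The argument is essentially a bookkeeping composition of three quoted results, so I do not anticipate a genuine obstacle. The only step requiring care is the final time-complexity simplification: one must track how the additive $n^2$ term — originating from the complement $\overline{G^2}$ in Theorem~\ref{MWIS:cocomp} — interacts with the $\min\{nm+n,\,n^\omega\}$ appearing in Theorem~\ref{thm:wed-mwis}, and in particular exploit $\omega \ge 2$ to absorb $n^2$ into $n^\omega$ so that the $\min$ in the statement is preserved.
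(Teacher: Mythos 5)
Your proposal is correct and follows essentially the same route as the paper's own proof: apply Theorem~\ref{ATGk} to conclude $G^2$ is cocomparability, use Theorem~\ref{MWIS:cocomp} to get $T(|G^2|) = {\mathcal O}(n^2)$, and plug into Theorem~\ref{thm:wed-mwis}. Your explicit verification that ${\mathcal O}\big(\min\{nm+n,n^\omega\}+n^2\big) = {\mathcal O}\big(\min\{nm+n^2,n^\omega\}\big)$ (via $\omega \ge 2$) is a detail the paper asserts without elaboration, and it is carried out correctly.
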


\begin{proof}
Let ${\cal C}$ be the class of AT-free graphs, and let $G\in {\cal C}$ with $n = |V(G)|$. By Theorem~\ref{ATGk}, the square of $G$ is cocomparability.
By Theorem~\ref{MWIS:cocomp}, the \mwis problem is solvable on $G^2$ in time ${\mathcal O}(n^2)$. Thus, Theorem~\ref{thm:wed-mwis} implies that in the class of AT-free graphs, the \maxwed and \minwed problems are solvable in time
${\mathcal O}\big(\min\{nm+n,n^\omega\}+ n^2\big) = {\mathcal O}\big(\min\{nm+n^2,n^\omega\}\big)$.
\end{proof}

Theorem~\ref{thm:AT-free} generalizes the polynomial-time algorithms for the efficient dominating set problem in the class of AT-free graphs by Brandst\"adt \emph{et al.}~\cite{BLR} and by Broersma \emph{et al.}~\cite{BKM99}.
Both papers~\cite{BLR,BKM99} solve the unweighted version of the problem in time ${\mathcal O}(n^4)$, while we give an algorithm of complexity ${\mathcal O}\big(\min\{nm+n^2,n^\omega\}\big)$ to solve the more general, weighted versions of the problem.
The polynomial time solvability implied by Theorem~\ref{thm:AT-free} can also be seen as a common extension of the polynomial-time solvability of the \minwed problem on interval graphs~\cite{CL94}, cocomparability graphs~\cite{CRC95}, and permutation graphs~\cite{LLT97}, all subclasses of AT-free graphs.

\subsection*{Acknowledgements}

We are grateful to the two anonymous referees for their helpful comments. 
M.M.~is supported in part by ``Agencija za raziskovalno dejavnost Republike Slovenije'', research program P$1$--$0285$ and research projects J$1$-$5433$, J$1$-$6720$, and J$1$-$6743$.

\bibliographystyle{abbrv}

\end{document}